\theoremstyle{plain}
\numberwithin{equation}{section}
\newtheorem{thm}{Theorem}[section]
\newcommand{\complex}{{\mathbb C}}
\newcommand{\positive}{{\mathbb N}}
\newcommand{\real}{{\mathbb R}}
\newcommand{\ascript}{{\mathcal A}}
\newcommand{\bscript}{{\mathcal B}}
\newcommand{\cscript}{{\mathcal C}}
\newcommand{\dscript}{{\mathcal D}}
\newcommand{\pscript}{{\mathcal P}}
\newcommand{\rscript}{{\mathcal R}}
\newcommand{\qscript}{{\mathcal Q}}
\newcommand{\tscript}{{\mathcal T}}
\newcommand{\rmcyl}{\mathrm{cyl}}
\newcommand{\rmtr}{\mathrm{tr}}
\newcommand{\rmre}{\mathrm{Re\,}}
\newcommand{\ftilde}{\widetilde{f}}
\newcommand{\omegahat}{\widehat{\omega}}
\newcommand{\offspring}{\!\shortrightarrow\,}
\newcommand{\ab}[1]{\left|#1\right|}
\newcommand{\brac}[1]{\left\{#1\right\}}
\newcommand{\paren}[1]{\left(#1\right)}
\newcommand{\sqbrac}[1]{\left[#1\right]}
\newcommand{\elbows}[1]{{\left\langle#1\right\rangle}}
\newcommand{\ket}[1]{{\left|#1\right>}}
\newcommand{\bra}[1]{{\left<#1\right|}}
\begin{document}

\title{CAUSAL SET APPROACH TO\\DISCRETE QUANTUM GRAVITY
}
\author{S. Gudder\\ Department of Mathematics\\
University of Denver\\ Denver, Colorado 80208, U.S.A.\\
sgudder@du.edu
}
\date{}
\maketitle

\begin{abstract}
We begin by describing a sequential growth model in which the universe grows one element at a time in discrete time steps.
At each step, the process has the form of a causal set and the ``completed'' universe is given by a path consisting of a discretely growing chain of causal sets. We then introduce a quantum dynamics to obtain a quantum sequential growth process (QSGP) which may lead to a viable model for discrete quantum gravity. A discrete version of Einstein's field equation is derived and a definition for discrete geodesics is proposed. A type of QSGP called an amplitude process is introduced. An example of an amplitude process called a complex percolation process is studied. This process conforms with general principles of causality and covariance. We end with some detailed quantum measure calculations for a specific percolation constant.
\end{abstract}

\section{Introduction}  
The causal set approach to discrete quantum gravity is an attempt to unify general relativity and quantum mechanics
\cite{blms87, sor03, sur11}. These two theories are quite different and it is not at all clear how such a unification is possible. This question has been investigated for about 80 years and is probably the greatest unsolved problem in theoretical physics. Briefly speaking, quantum theory is based on the study of self-adjoint and unitary operators on a complex Hilbert space, while general relativity is based on the study of smooth curves and tensors on a $4$-dimensional, real, differentiable manifold $M$ with a Lorentzian metric tensor $g_{\mu\nu}$.

We first examine $M$ more closely. For each $a\in M$ there is a forward light cone $C_a^+\subseteq M$ consisting of points in the future of $a$ that $a$ can communicate with via a light signal. If $b\in C_a^+$ we say that $b$ is in the
\textit{causal future} of $a$ and write $a<b$. Then $(M,<)$ becomes a \textit{partially ordered set} (poset); that is, $a\not<a$ (irreflexivity) and $a<b$, $b<c$ imply that $a<c$ (transitivity). We call $(M,<)$ the \textit{causal structure} on $M$. To remind us that we are dealing with causal structures, we call an arbitrary finite poset a \textit{causal set} (or \textit{causet}). Investigators have shown that the causal structure completely determines $M$ \cite{sor03, sur11}. That is, $<$ determines the topology, differential structure, smooth functions, dimension, line and volume elements and tensor $g_{\mu\nu}$ for $M$. We can therefore forget about the differential structure of $M$ and only consider the poset $(M,<)$ which is clearly a great simplification.

In comparing quantum mechanics (especially quantum field theory) and general relativity, we notice one important similarity. They both contain many singularities and the theories break down at small distances. This indicates that quantum mechanics should be based upon a finite-dimensional complex Hilbert space and that general relativity should be discrete with a minimum distance which we take to be a Planck length $\ell _p\approx 1.6\times 10^{-33}$ cm. and a minimum time which we take to be a Planck instant $t_p\approx 5.4\times 10^{-44}$ sec.

Beside discreteness, a second motivating feature of the causal set approach is that the universe is expanding both in size and in matter creation. These features suggest that we should consider a discrete sequential growth model. In such a model, the universe grows one element at a time in discrete steps given by Planck instants. At each step, the universe has the form of a causet and the ``completed'' universe is given by a path consisting of a discretely growing chain of causets. We then introduce a quantum dynamics $\rho _n$, $n=1,2,\ldots$, to obtain a quantum sequential growth process (QSGP) which may lead to a viable model for discrete quantum gravity. The dynamics $\rho _n$ is given by a positive operator on the Hilbert space of causet paths of length $n$. The operators $\rho _n$ are required to satisfy certain normalization and consistency conditions.

At this stage of development, the precise form of $\rho _n$ is not known. However, a discrete version of Einstein's field equation is derived and it is possible that $\rho _n$ can be specified by determining whether this discrete equation is approximated by the classical Einstein equation. We also propose a definition for discrete geodesics.

Although various constructions of a QSGP $\rho _n$ are known \cite{gud111, gud112, gud121}, we now introduce a particularly simple type called an amplitude process. An example of an amplitude process called a complex percolation process is studied. This process conforms with general principles of causality and covariance. Some detailed quantum measure calculations for a specific percolation constant are performed and some geodesics are briefly examined.

\section{Sequential Growth Model} 
Let $\pscript _n$ be the collection of all causets of cardinality $n$, $n=1,2,\ldots$, and let $\pscript =\cup\pscript _n$. Two isomorphic causets are considered to be identical. If $x\in\pscript$ and $a,b\in x$, we say that $a$ is an \textit{ancestor} of
$b$ and $b$ is a \textit{successor} of $a$ if $a<b$. We say that $a$ is a \textit{parent} of $b$ and $b$ is a \textit{child} of $a$
if $a<b$ and there is no $c\in x$ with $a<c<b$. We call $a$ \textit{maximal} in $x$ if there is no $b\in x$ with $a<b$.
If $x\in\pscript _n$, $y\in\pscript _{n+1}$, then $x$ \textit{produces} $y$ (and $y$ is a \textit{product} of $x$) if $y$ is obtained from $x$ by adjoining a single element $a$ to $x$ that is maximal in $y$. Thus, $a$ is not in the causal past of any element of $y$. If $x$ produces $y$, we write $x\to y$. The transitive closure of $\offspring$ makes $\pscript$ into a poset and we call $(\pscript ,\offspring )$ a \textit{sequential growth model}. A \textit{path} in $\pscript$ is a string (sequence)
 $\omega =\omega _1\omega _2\cdots$, $\omega _i\in\pscript _i$ and $\omega _i\to\omega _{i+1}$, $i=1,2,\ldots\,$. An
 $n$-\textit{path} is a finite string $\omega =\omega _1\omega _2\cdots\omega _n$, where again $\omega _i\in\pscript _i$ and
 $\omega _i\to\omega _{i+1}$. We denote the set of paths by $\Omega$ and the set of $n$-paths by $\Omega _n$. If $x$ produces $y$ in $r$ isomorphic ways, we say that the \textit{multiplicity} of $x\to y$ is $r$ and write $m(x\to y)=r$. For example, in Figure~1, $m(x_3\to x_6)=2$ and multiplicities greater than $1$ are designated. To be precise, the different isomorphic ways requires a labeling of the causets. This is the only place we need to mention labeled causets and we otherwise only consider unlabeled causets.
 
 We think of a path $\omega\in\Omega$ as a possible universe (universe history) \cite{gud111, gud112, gud121}. For
 $\omega =\omega _1\omega _2\cdots$, $\omega _i\in\pscript _i$ represents a universe at Planck instant $i$. This gives a growth model for the universe \cite{rs00, sur11, vr06}. The vertices of the causet $\omega _i$ represent a space-time framework (scaffolding) at step $i$ (instant $i$). A vertex may or may not be occupied by a point mass or energy. Figure~1 gives the first four steps of the sequential growth model representing possible universes. The vertical rectangles represent antimatter causets, the horizontal rectangles represent matter causets and the circles represent mixed causets. For a discussion of these types, we refer the reader to \cite{gud121}.

\includegraphics*[trim= 0 0 30 90, scale=.75, angle=90]{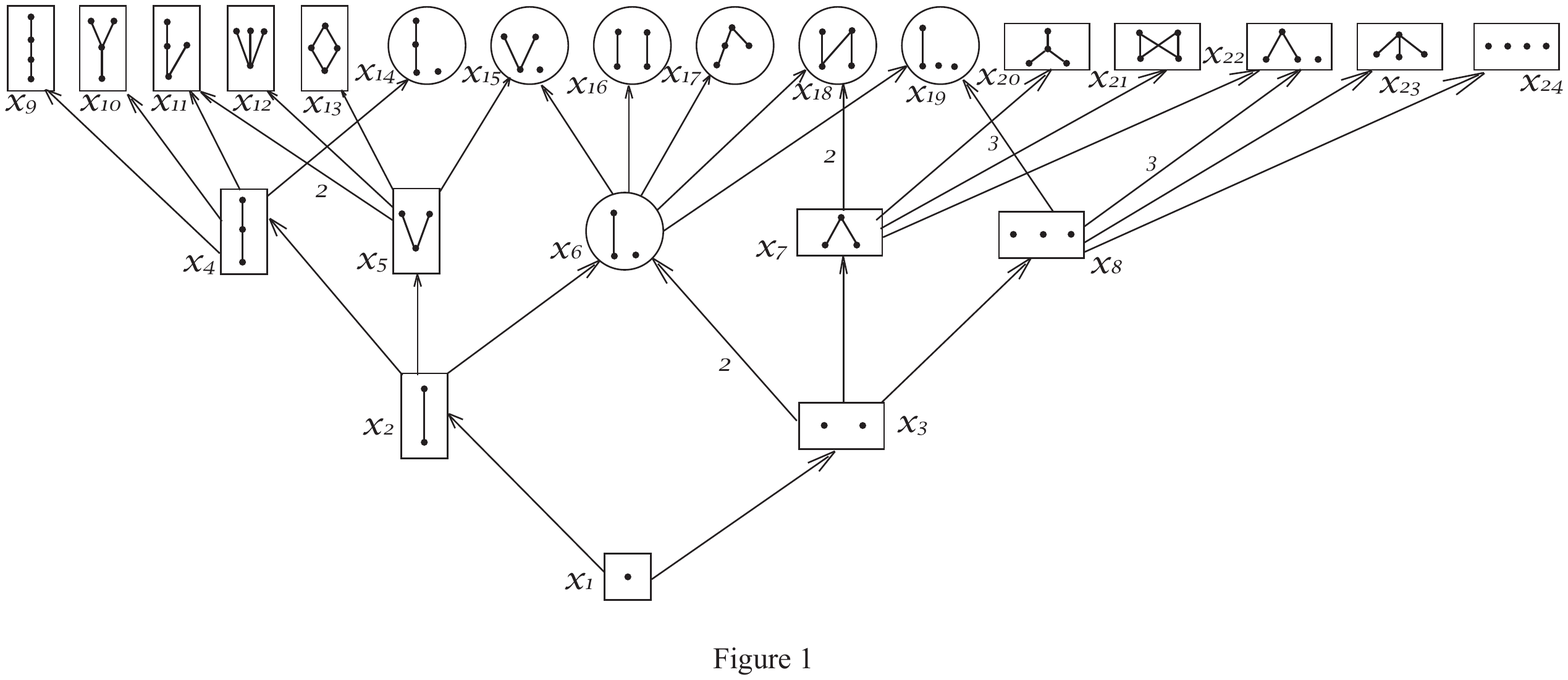}

We use the notation $\ascript _n$ for the power set $2^{\Omega _n}$, $n=1,2,\ldots\,$. For $x\in\pscript _i$ we use the notation
 \begin{equation*}
x\offspring =\brac{y\in\pscript _{i+1}\colon x\to y}
\end{equation*}
and for $\omega =\omega _1\omega _2\cdots\omega _n\in\Omega _n$ we write
\begin{equation*}
\omega\offspring =\brac{\omega _1\omega _2\cdots\omega _n\omega _{n+1}\colon\omega _n\to\omega _{n+1}}
  \in\ascript _{n+1}
\end{equation*}
Finally, for $A\in\ascript _n$ we define
\begin{equation*}
A\offspring =\bigcup _{\omega\in A}(\omega\offspring)\in\ascript _{n+1}
\end{equation*}
The set of paths beginning with $\omega =\omega _1\cdots\omega _n\in\Omega _n$ is called an
\textit{elementary cylinder set} and is denoted $\rmcyl (\omega )$. If $A\in\ascript _n$, then the \textit{cylinder set} $\rmcyl (A)$ is defined by
\begin{equation*}
\rmcyl (A)=\bigcup _{\omega\in A}\rmcyl (\omega )
\end{equation*}
Using the notation
\begin{equation*}
\cscript (\Omega _n)=\brac{\rmcyl (A)\colon A\in\ascript _n}
\end{equation*}
notice that if $A\in\cscript (\Omega _n)$, then $A=\rmcyl (A_1)$ for some $A_1\in\ascript _n$ so
$A=\rmcyl (A_1\offspring)\in\cscript (\Omega _{n+1})$. We conclude that
\begin{equation*}
\cscript (\Omega  _1)\subseteq\cscript (\Omega _2)\subseteq\cdots
\end{equation*}
is an increasing sequence of subalgebras of the \textit{cylinder algebra} $\cscript (\Omega )=\cup\cscript (\Omega _n)$. For
$A\in 2^\Omega$ we define the set $A^n\in\ascript _n$ by
\begin{equation*}
A^n=\brac{\omega _1\omega _2\cdots\omega _n\in\Omega _n\colon\omega _1\omega _2
  \cdots\omega _n\omega _{n+1}\cdots\in A}
\end{equation*}
We think of $A^n$ as the step-$n$ approximation to $A$. Notice that $A^n$ is the set of $n$-paths whose
continuations are in $A$.

\section{Quantum Sequential Growth Processes} 
Denoting the cardinality of a set $A$ by $\ab{A}$, we define the $\ab{\Omega _n}$-dimensional complex Hilbert space $H_n=L_2(\Omega _n,\ascript _n,\nu _n)$ where $\nu _n$ is the counting measure on $\Omega _n$. Of course, $H_n$ is isomorphic to $\complex ^{\ab{\Omega _n}}$. Let $\chi _A$ denote the characteristic function of a set $A\in\ascript _n$ and let $1_n=\chi _{\Omega _n}$. A positive operator $\rho _n$ on $H_n$ satisfying $\elbows{\rho _n1_n,1_n}=1$ is called a
$q$-\textit{probability operator} and the set of $q$-probability operators on $H_n$ is denoted $\qscript (H_n)$. Corresponding to $\rho _n\in\qscript (H_n)$ we have an $n$-\textit{decoherence functional}
$D_n\colon\ascript _n\times\ascript _n\to\complex$ given by
\begin{equation*}
D_n(A,B)=\elbows{\rho _n\chi _B,\chi_A}
\end{equation*}
which gives a measure of the interference between $A$ and $B$. It is easy to show that $D_n$ has the usual properties of a decoherence functional. That is, $D_n(\Omega _n\Omega _n)=1$, $D_n(A,B)=\overline{D_n(B,A)}$, $A\mapsto D_n(A,B)$ is a complex measure on $\ascript _n$ for any $B\in\ascript _n$ and if $A_1,\ldots ,A_m\in\ascript _n$ then $D_n(A_i,A_j)$ are the components of a positive semidefinite $m\times m$ matrix. The map $\mu _n\colon\ascript _n\to\real ^+$ given by
$\mu _n(A)=D_n(A,A)$ is called the $q$-\textit{measure} corresponding to $\rho _n$ \cite{sor94}. We interpret $\mu _n(A)$ as the \textit{propensity} of the event $A$ when the system is described by $\rho _n$ \cite{gud111,gud112}. Notice that
$\mu _n(\Omega _n)=1$. Although $\mu _n$ is not additive, it does satisfy the \textit{grade}-2 \textit{additivity} condition: if $A,B,C\in\ascript _n$ are mutually disjoint, then
\begin{equation*}
\mu _n(A\cup B\cup C)=\mu _n(A\cup B)+\mu _n(A\cup C)+\mu _n(B\cup C)-\mu _n(A)-\mu _n(B)-\mu _n(C)
\end{equation*}

We say that a sequence $\rho _n\in\qscript (H_n)$, $n=1,2,\ldots$, is \textit{consistent} if
$D_{n+1}(A\offspring ,B\offspring )=D_n(A,B)$ for all $A,B\in\ascript _n$. Of course, it follows that
$\mu _{n+1}(A\offspring )=\mu _n(A)$ for all $A\in\ascript _n$. A consistent sequence $\rho _n\in\qscript (H_n)$ provides a quantum dynamics for the growth model $(\pscript ,\offspring )$ and we call $\rho _n$ a
\textit{quantum sequential growth process} (QSGP) \cite{gud111, gud112, gud121}. At this stage of development we do not know the specific form of $\rho _n$ that would describe quantum gravity. It is hoped that further theoretical properties or experimental data will determine $\rho _n$. One possible approach is considered in Section~4.

Let $\rho _n\in\qscript (H_n)$ be a QSGP. Although we have a $q$-measure $\mu _n$ on $\ascript _n$, $n=1,2,\ldots$, it is important to extend $\mu _n$ to physically relevant subsets of $\Omega$ in a systematic way. We say that a set
$A\subseteq\Omega$ is \textit{beneficial} if $\lim\mu _n(A^n)$ exists and is finite in which case we define $\mu (A)$ to be this limit. We denoted the collection of beneficial sets by $\bscript (\rho _n)$. If $A\in\cscript (\Omega )$ is a cylinder set, then
$A\in\cscript (\Omega _i)$ for some $i\in\positive$. In this case $A=\rmcyl (A_1)$ for some $A_1\in\ascript _i$. Now $A^i=A_1$, $A^{i+1}=A_1\offspring$, $A^{i+2}=(A_1\offspring )\offspring$, $\cdots$. Hence,
\begin{equation*}
\lim\mu _n(A^n)=\mu _i(A^i)=\mu _i(A_1)
\end{equation*}
so $A$ is beneficial and $\mu (A)=\mu _i(A_1)$. We conclude that $\cscript (\Omega )\subseteq\bscript (\rho _n)$ and if
$A\in\cscript (\Omega )$ then $\mu (A)=\mu _n(A^n)$ for $n$ sufficiently large. Simple examples are
$\emptyset,\Omega\in\bscript (\rho _n)$ with $\mu (\emptyset )=0$, $\mu (\Omega )=1$. Of course, there are physically relevant subsets of $\Omega$ that are not cylinder sets. For example, if $\omega\in\Omega$ then
$\brac{\omega}\notin\cscript (\Omega )$. Whether $\brac{\omega}\in\bscript (\rho _n)$ depends on $\rho _n$ and we shall consider some examples in Section~6.

\section{Discrete Einstein Equation} 
Let $Q_n=\cup _{i=1}^n\pscript _i$ and let $K_n$ be the Hilbert space $\complex ^{Q_n}$ with the standard inner product
\begin{equation*}
\elbows{f,g}=\sum _{x\in QN}\overline{f(x)}g(x)
\end{equation*}
Let $L_n=K_n\otimes K_n$ which we identify with $\complex ^{Q_n\times Q_n}$. Let $\rho _n\in\qscript (H_n)$ be a QSGP with corresponding decoherence matrices
\begin{equation*}
D_n(\omega ,\omega ')=D_n\paren{\brac{\omega},\brac{\omega '}},\quad\omega ,\omega'\in\Omega _n
\end{equation*}
If $\omega =\omega _1\omega _2\cdots\omega _n\in\Omega _n$ and $\omega _i=x$ for some $i$, then $\omega$
\textit{contains} $x$. For $x,y\in Q_n$ we define
\begin{equation*}
D_n(x,y)=\sum\brac{D_n(\omega ,\omega ')\colon\omega\hbox{ contains§ }x,\omega '\hbox{ contains }y}
\end{equation*}
Due to the consistency of $\rho _n$, $D_n(x,y)$ is independent of $n$ if $n\ge\ab{x},\ab{y}$ where $\ab{x}$ is the cardinality of $x\in Q_n$. Also $D_n(x,y)$, $x,y\in Q_n$, are the components of a positive semi-definite matrix.

We think of $Q_n$ as an analogue of a differentiable manifold and $D_n(x,y)$ as an analogue of a metric tensor. If $y\in Q_n$ and $x\to y$ we think of the pair $(x,y)$ as a tangent vector at $y$. Thus, there are as many tangent vectors at $y$ as there are producers of $y$. Finally, the elements of $K_n$ are analogous to smooth functions on the manifold.

If $\omega =\omega _1\omega _2\cdots\omega _n\in\Omega _n$ and $\omega _i=x$, then $i=\ab{x}$ and $\omega$ contains $x$ if and only if $\omega _{\ab{x}}=x$. An $n$-path $\omega$ containing $x$ determines a tangent vector
$(\omega _{\ab{x}-1},x)$ at $x$ (assuming $\ab{x}\ge 2$). For $\omega\in\Omega _n$ define the \textit{difference operator}
$\varbigtriangleup _\omega ^n$ on $K_n$ by
\begin{equation*}
\varbigtriangleup _\omega ^nf(x)=\sqbrac{f(x)-f(\omega _{\ab{x}-1})}\delta _{x,\omega _{\ab{x}}}
\end{equation*}
where $\delta _{x,\omega _{\ab{x}}}$ is the Kronecker delta. It is easy to check that $\varbigtriangleup _\omega ^n$ satisfies the \textit{discrete Leibnitz rule}:
\begin{equation*}
\varbigtriangleup _\omega ^nfg(x)
  =f(x)\varbigtriangleup _\omega ^ng(x)+g(\omega _{\ab{x}-1})\varbigtriangleup _\omega ^nf(x)
\end{equation*}

Given a function $f\in\complex ^{Q_n\times Q_n}=L_n$ of two variables we have the function $\ftilde\in K_n$ of one variable $\ftilde (x)=f(x,x)$ and given a function $g\in K_n$ we have the functions of two variables $g_1,g_2\in L_n$ where
$g_1(x,y)=g(x)$ and $g_2(x,y)=g(y)$ for all $x,y\in Q_n$. For $\omega ,\omega '\in\Omega _n$, we want
$\varbigtriangleup _{\omega ,\omega '}^n\colon L_n\to L_n$ that extends $\varbigtriangleup _\omega ^n$ and satisfies Leibnitz's rule. That is,
\begin{align}           
\label{eq41}
\varbigtriangleup _{\omega ,\omega '}^ng_1(x,y)
  &=\varbigtriangleup _\omega ^ng(x)\delta _{y,\omega '_{\ab{y}}},
  \varbigtriangleup _{\omega ,\omega '}^ng_2(x,y)
  =\varbigtriangleup _{\omega '}^ng(y)\delta _{x,\omega _{\ab{x}}}\\
\intertext{and}
\label{eq42}
\varbigtriangleup _{\omega ,\omega '}^nfg(x,y)
  &=f(x,y)\varbigtriangleup _{\omega ,\omega '}g(x,y)+g(\omega _{\ab{x}-1},\omega '_{\ab{x}-1})
  \varbigtriangleup _{\omega ,\omega '}^nf(x,y)
\end{align}
The next two theorems are proved in \cite{gud122}

\begin{thm}       
\label{thm41}
A linear operator $\varbigtriangleup _{\omega ,\omega '}^n\colon L_n\to L_n$ satisfies \eqref{eq41} and \eqref{eq42} if and only if it has the form
\begin{equation}         
\label{eq43}
\varbigtriangleup _{\omega ,\omega '}^nf(x,y)=\sqbrac{f(x,y)-f(\omega _{\ab{x}-1},\omega '_{\ab{y}-1})}
  \delta _{x,\omega _{\ab{x}}}\delta _{y,\omega '{\ab{y}}}
\end{equation}
\end{thm}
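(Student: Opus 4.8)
The plan is to reduce both implications to a single computation on product functions, using that $L_n=\complex ^{Q_n\times Q_n}$ is spanned by them. Every $f\in L_n$ is a linear combination of products $g_1h_2$ with $g,h\in K_n$, since $f=\sum _{a,b\in Q_n}f(a,b)(\epsilon _a)_1(\epsilon _b)_2$ where $\epsilon _a\in K_n$ is the point mass $\epsilon _a(x)=\delta _{x,a}$ and $(g_1h_2)(x,y)=g(x)h(y)$. Because $\varbigtriangleup _{\omega ,\omega '}^n$ is linear, it is completely determined by its values on such products, and I will show that \eqref{eq41} together with the Leibnitz rule \eqref{eq42} force those values to equal \eqref{eq43}, while conversely \eqref{eq43} satisfies \eqref{eq41} and \eqref{eq42}. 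Throughout I read the shifted argument in \eqref{eq42} as $g(\omega _{\ab{x}-1},\omega '_{\ab{y}-1})$, the reading dictated by the forms of \eqref{eq41} and \eqref{eq43}; the printed $\omega '_{\ab{x}-1}$ is evidently a slip for $\omega '_{\ab{y}-1}$, and the argument closes only under this consistent indexing.

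For necessity, suppose $\varbigtriangleup _{\omega ,\omega '}^n$ satisfies \eqref{eq41}--\eqref{eq42} and apply \eqref{eq42} with first factor $g_1$ and second factor $h_2$. Since $g_1(x,y)=g(x)$ and $h_2(\omega _{\ab{x}-1},\omega '_{\ab{y}-1})=h(\omega '_{\ab{y}-1})$, this reads $\varbigtriangleup _{\omega ,\omega '}^n(g_1h_2)(x,y)=g(x)\varbigtriangleup _{\omega ,\omega '}^nh_2(x,y)+h(\omega '_{\ab{y}-1})\varbigtriangleup _{\omega ,\omega '}^ng_1(x,y)$. Substituting the two expressions from \eqref{eq41} and then the defining formula $\varbigtriangleup _\omega ^ng(x)=\sqbrac{g(x)-g(\omega _{\ab{x}-1})}\delta _{x,\omega _{\ab{x}}}$ converts the right-hand side into $\delta _{x,\omega _{\ab{x}}}\delta _{y,\omega '_{\ab{y}}}\brac{g(x)\sqbrac{h(y)-h(\omega '_{\ab{y}-1})}+h(\omega '_{\ab{y}-1})\sqbrac{g(x)-g(\omega _{\ab{x}-1})}}$. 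The two copies of $g(x)h(\omega '_{\ab{y}-1})$ cancel, leaving $\delta _{x,\omega _{\ab{x}}}\delta _{y,\omega '_{\ab{y}}}\sqbrac{g(x)h(y)-g(\omega _{\ab{x}-1})h(\omega '_{\ab{y}-1})}$, which is precisely \eqref{eq43} evaluated at $f=g_1h_2$. Linearity then extends the identity to all of $L_n$.

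For sufficiency, define $\varbigtriangleup _{\omega ,\omega '}^n$ by \eqref{eq43}. Condition \eqref{eq41} is immediate: setting $f(x,y)=g(x)$, the second-variable shift does not touch $g$, so \eqref{eq43} reduces to $\sqbrac{g(x)-g(\omega _{\ab{x}-1})}\delta _{x,\omega _{\ab{x}}}\delta _{y,\omega '_{\ab{y}}}=\varbigtriangleup _\omega ^ng(x)\,\delta _{y,\omega '_{\ab{y}}}$, and the $g_2$ case is symmetric. For \eqref{eq42} I abbreviate $\delta =\delta _{x,\omega _{\ab{x}}}\delta _{y,\omega '_{\ab{y}}}$ and write $f,f',g,g'$ for the values of $f$ and $g$ at $(x,y)$ and at the shifted point $(\omega _{\ab{x}-1},\omega '_{\ab{y}-1})$; then the left side of \eqref{eq42} is $(fg-f'g')\delta$ and the right side is $f(g-g')\delta +g'(f-f')\delta =(fg-f'g')\delta$, so the two sides agree.

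I expect no serious obstacle. Once the spanning remark collapses the problem to products, both directions are short telescoping identities, the cancellation of the single cross term being the only algebra involved. The one place that genuinely needs care is the bookkeeping of which path, $\omega$ or $\omega '$, and which cardinality, $\ab{x}$ or $\ab{y}$, each Kronecker factor and each shift refers to---in particular resolving the index discrepancy in \eqref{eq42} as above. The real content is conceptual rather than computational: prescribing $\varbigtriangleup _{\omega ,\omega '}^n$ on the two generating families $g_1$ and $g_2$ via \eqref{eq41} and propagating it through the Leibnitz rule \eqref{eq42} leaves room for exactly one linear operator.
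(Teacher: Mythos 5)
Your argument is correct. Note, however, that the paper itself does not prove Theorem~\ref{thm41}: it states that this theorem (and Theorem~\ref{thm42}) is proved in \cite{gud122}, so there is no in-paper proof to compare against. Your route is the natural one: the point masses $\epsilon _a$ give products $(\epsilon _a)_1(\epsilon _b)_2$ spanning $L_n$, conditions \eqref{eq41} pin down the operator on the factors, the Leibnitz rule \eqref{eq42} then pins it down on the products (the single cross term $g(x)h(\omega '_{\ab{y}-1})$ cancelling to yield \eqref{eq43} on the spanning set), and linearity finishes necessity; sufficiency is the two-line telescoping check you give. Your reading of the second argument in \eqref{eq42} as $\omega '_{\ab{y}-1}$ rather than the printed $\omega '_{\ab{x}-1}$ is the right call --- it is the only indexing consistent with \eqref{eq41}, with \eqref{eq43}, and with the requirement that $\varbigtriangleup _{\omega ,\omega}^n$ restrict to $\varbigtriangleup _\omega ^n$ on diagonal functions. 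The one thing worth making explicit in a final write-up is the remark you rely on implicitly: since both the hypothetically given operator and the operator defined by \eqref{eq43} are linear and agree on the spanning family $(\epsilon _a)_1(\epsilon _b)_2$, they agree on all of $L_n$; this is where linearity of $\varbigtriangleup _{\omega ,\omega '}^n$, assumed in the statement, is actually used.
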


The result \eqref{eq43} is not surprising because it is the natural extension of $\varbigtriangleup _\omega ^n$ from $K_n$ to $L_n$. Also, $\varbigtriangleup _{\omega ,\omega '}^n$ extends $\varbigtriangleup _\omega ^n$ in the sense that
\begin{equation*}
\varbigtriangleup _{\omega ,\omega}^nf(x,y)=\varbigtriangleup _\omega ^n\ftilde (x)
\end{equation*}

\begin{thm}       
\label{thm42}
{\rm (a)}\enspace A linear operator $T_\omega\colon K_n\to K_n$ satisfies the Leibnitz rule and $T_\omega f(x)=0$ when
$\omega _{\ab{x}}\ne x$ if and only if there exists a function $\beta _\omega\colon Q_n\to\complex$ such that
$T_\omega =\beta _\omega\varbigtriangleup _\omega ^n$.
{\rm (b)}\enspace A linear operator $T_{\omega ,\omega '}\colon L_n\to L_n$ satisfies the Leibnitz rule and
$T_{\omega ,\omega '}f(x,y)=0$ when $\omega _{\ab{x}}\ne x$ or $\omega '_{\ab{y}}\ne y$ if and only if there exists a function $\beta _{\omega ,\omega '}\colon Q_n\times Q_n\to\complex$ such that
$T_{\omega ,\omega '}=\beta _{\omega ,\omega '}\varbigtriangleup _{\omega ,\omega '}^n$.
\end{thm}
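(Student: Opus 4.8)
\emph{The easy direction and reduction.} If $T_\omega=\beta_\omega\varbigtriangleup_\omega^n$, then since $\varbigtriangleup_\omega^n f(x)$ carries the factor $\delta_{x,\omega_{\ab x}}$ it vanishes whenever $\omega_{\ab x}\ne x$, and so does $T_\omega f$; moreover, multiplying the discrete Leibnitz rule for $\varbigtriangleup_\omega^n$ through by $\beta_\omega(x)$ reproduces the Leibnitz rule for $T_\omega$, so this direction is immediate. For the converse I fix $\omega$ and use the standard basis $\chi_{\brac z}$, $z\in Q_n$, of $K_n$, where $\chi_{\brac z}(x)=\delta_{x,z}$. Since $T_\omega$ is linear, it suffices to compute each $T_\omega\chi_{\brac z}$ and to exhibit a single function $\beta_\omega$ for which $T_\omega\chi_{\brac z}=\beta_\omega\varbigtriangleup_\omega^n\chi_{\brac z}$ for all $z$.

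\emph{Localizing the support.} Because $\chi_{\brac z}$ is idempotent under pointwise multiplication, the Leibnitz rule gives $T_\omega\chi_{\brac z}(x)=\sqbrac{\delta_{x,z}+\delta_{\omega_{\ab x-1},z}}T_\omega\chi_{\brac z}(x)$, which forces $T_\omega\chi_{\brac z}(x)=0$ unless $x=z$ or $\omega_{\ab x-1}=z$ (the cardinality grading $\ab{\omega_i}=i$ rules out both holding at once). Combining this with the hypothesis that $T_\omega\chi_{\brac z}(x)=0$ whenever $\omega_{\ab x}\ne x$, I conclude that $T_\omega\chi_{\brac z}=0$ entirely when $z$ is off the path $\omega$, while for an on-path $z$ (i.e. $\omega_{\ab z}=z$) the support of $T_\omega\chi_{\brac z}$ is confined to the two points $z$ and its on-path successor $\omega_{\ab z+1}$.

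\emph{The coefficient relation and assembly.} For $z\ne w$ the product $\chi_{\brac z}\chi_{\brac w}$ is identically $0$, so evaluating the Leibnitz rule at $x=z$ yields $T_\omega\chi_{\brac w}(z)+\delta_{\omega_{\ab z-1},w}T_\omega\chi_{\brac z}(z)=0$; taking $w$ to be the on-path predecessor $\omega_{\ab z-1}$ of $z$ gives, for every on-path $w$ with successor $s=\omega_{\ab w+1}$, the single relation $T_\omega\chi_{\brac w}(s)=-T_\omega\chi_{\brac s}(s)$. I then set $\beta_\omega(x)=T_\omega\chi_{\brac x}(x)$ for on-path $x$ and $\beta_\omega(x)=0$ otherwise, and compute directly from its definition that $\varbigtriangleup_\omega^n\chi_{\brac z}$ equals $+1$ at $z$ and $-1$ at $\omega_{\ab z+1}$ for on-path $z$ and is $0$ elsewhere. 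Comparing the two sides, $\beta_\omega\varbigtriangleup_\omega^n\chi_{\brac z}$ agrees with $T_\omega\chi_{\brac z}$ at $z$ by the definition of $\beta_\omega$ and at the successor by the displayed relation, and both vanish elsewhere; agreement on a basis gives $T_\omega=\beta_\omega\varbigtriangleup_\omega^n$.

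\emph{Part (b) and the main obstacle.} For (b) I would run the same three steps on the product basis $\chi_{\brac{(z,w)}}$ of $L_n$, using the two-variable Leibnitz rule \eqref{eq42} with predecessor $(\omega_{\ab x-1},\omega'_{\ab y-1})$ and the explicit form \eqref{eq43} of $\varbigtriangleup_{\omega,\omega'}^n$ from Theorem~\ref{thm41}. The idempotent step confines the support of $T_{\omega,\omega'}\chi_{\brac{(z,w)}}$ to $(z,w)$ and its simultaneous successor $(\omega_{\ab z+1},\omega'_{\ab w+1})$, the disjoint-product step supplies the analogous single relation, and setting $\beta_{\omega,\omega'}(x,y)=T_{\omega,\omega'}\chi_{\brac{(x,y)}}(x,y)$ on the product path completes the identification. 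The substantive work throughout is the Kronecker-delta bookkeeping in the two middle steps: one must check that the grading $\ab{\omega_i}=i$ really collapses each support to exactly two points and that the disjoint-product computation yields precisely the one relation needed to match $\beta_\omega\varbigtriangleup_\omega^n$. A minor boundary point is the vertex of cardinality $1$, where $\omega_{\ab x-1}=\omega_0$ is absent; I adopt the convention, consistent with the $\ab x\ge 2$ proviso used earlier, that any term referring to $\omega_0$ is simply omitted.
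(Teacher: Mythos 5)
Your argument is correct, but there is nothing in this paper to compare it with: the paper explicitly defers the proofs of Theorems \ref{thm41} and \ref{thm42} to the reference \cite{gud122} and supplies no in-text proof. Taken on its own merits, your proof of (a) is complete and sound. The easy direction is immediate, and your converse is the natural one: since $Q_n$ is finite, $T_\omega$ is determined by its values on the indicator functions $\chi_{\brac{z}}$; idempotence $\chi_{\brac{z}}^2=\chi_{\brac{z}}$ plus the vanishing hypothesis correctly pins the support of $T_\omega\chi_{\brac{z}}$ to $\brac{z,\omega_{\ab{z}+1}}$ for on-path $z$ (your observation that the cardinality grading forbids $x=z$ and $\omega_{\ab{x}-1}=z$ simultaneously is exactly the point that makes the bracket equal to $0$ or $1$), and the disjointness relation $\chi_{\brac{z}}\chi_{\brac{w}}=0$ with $w=\omega_{\ab{z}-1}$ supplies precisely the sign-reversal $T_\omega\chi_{\brac{w}}(s)=-T_\omega\chi_{\brac{s}}(s)$ needed to match $\beta_\omega\varbigtriangleup_\omega^n\chi_{\brac{w}}$, whose value is $+1$ at $w$ and $-1$ at $s$. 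Two small remarks: part (b) is only sketched, though the sketch runs the identical three steps on the product basis of $L_n$ and there is no obstruction (note you silently and correctly repair the apparent typo $\omega'_{\ab{x}-1}$ for $\omega'_{\ab{y}-1}$ in \eqref{eq42}); and the $\ab{x}=1$ boundary, where $\omega_0$ is undefined, is a genuine ambiguity in the paper's definitions rather than a gap in your argument, and your stated convention handles it consistently.
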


It is clear that $\mu _n(x)=D_n(x,x)$ is not stationary. That is, $\varbigtriangleup _\omega ^n\mu _n(x)\ne 0$ for all
$x\in Q_n$ in general. It is shown in \cite{gud122} that the simplest nontrivial combination
$\varbigtriangledown _\omega ^n=\beta _\omega\varbigtriangleup _\omega +\alpha _\omega$ satisfying
$\varbigtriangledown _\omega ^n\mu _n(x)=0$ for all $x\in Q_n$ is given by
\begin{equation*}
\varbigtriangledown _\omega ^nf(x)=\sqbrac{\mu _n(\omega _{\ab{x}-1})f(x)-\mu _n(x)f(\omega _{\ab{x}-1})}
  \delta _{x,\omega _{\ab{x}}}
\end{equation*}
We call $\varbigtriangledown _\omega ^n$ the \textit{covariant difference operator}.

Again, $\varbigtriangleup _{\omega ,\omega '}^nD_n(x,y)\ne 0$ for all $x,y\in Q_n$. It is shown in \cite{gud122} that the simplest nontrivial combination
\begin{equation*}
\varbigtriangledown _{\omega ,\omega '}^n
  =\beta _{\omega ,\omega '}\varbigtriangleup _{\omega ,\omega '}^n+\alpha _{\omega ,\omega '}
\end{equation*}
satisfying $\varbigtriangledown _{\omega ,\omega '}^nD_n(x,y)=0$ for all $x,y\in Q_n$ is given by
\begin{align*}
\varbigtriangledown _{\omega ,\omega '}^nf(x,y)&=\sqbrac{D_n(\omega _{\ab{x}-1},\omega '_{\ab{y}-1})f(x,y)
  -D_n(x,y)f(\omega _{\ab{x}-1},\omega '_{\ab{y}-1})}\\
  &\qquad \delta _{x,\omega _{\ab{x}}}\delta _{y,\omega '_{\ab{y}}}
\end{align*}
We call $\varbigtriangledown _{\omega ,\omega '}^n$ the \textit{covariant bidifference operator}.

The \textit{curvature operator} is defined as
\begin{equation*}
\rscript _{\omega ,\omega '}^n=\varbigtriangledown _{\omega ,\omega '}^n-\varbigtriangledown _{\omega ',\omega}^n
\end{equation*}
We define the \textit{metric operator} $\dscript _{\omega ,\omega '}^n$ on $L_n$ by
\begin{align*}
\dscript _{\omega ,\omega '}^nf(x,y)&=D_n(x,y)\left[f(\omega '_{\ab{x}-1},\omega _{\ab{y}-1})
  \delta _{x,\omega '_{\ab{x}}}\delta _{y,\omega _{\ab{y}}}\right.\\
  &\quad \left.-f(\omega _{\ab{x}-1},\omega '_{\ab{y}-1})\delta _{x,\omega _{\ab{x}}}\delta _{y,\omega '_{\ab{y}}}\right]
\end{align*}
and the \textit{mass-energy operator} $\tscript _{\omega ,\omega '}^n$ on $L_n$ by
\begin{align*}
\tscript _{\omega ,\omega '}^nf(x,y)&=\left[D_n(\omega _{\ab{x}-1},\omega '_{\ab{y}-1})
  \delta _{x,\omega _{\ab{x}}}\delta _{y,\omega '_{\ab{y}}}\right.\\
  &\quad -\left.D_n(\omega '_{\ab{x}-1},\omega _{\ab{y}-1})\delta _{x,\omega '_{\ab{x}}}\delta _{y,\omega _{\ab{y}}}\right]
  f(x,y)
\end{align*}
It is shown in \cite{gud122} that
\begin{equation}         
\label{eq44}
\rscript _{\omega ,\omega '}^n=\dscript _{\omega ,\omega '}^n+\tscript _{\omega ,\omega '}^n
\end{equation}
We call \eqref{eq44} the \textit{discrete Einstein equation} \cite{gud122, wal84}. In this sense, Einstein's equation always holds in this framework no matter what we have for the quantum dynamics $\rho _n$. One might argue that we obtained \eqref{eq44} just by definition. However, our derivation shows that $\rscript _{\omega ,\omega '}^n$ is a reasonable counterpart of the classical curvature tensor and $\dscript _{\omega ,\omega '}^n$ is a discrete counterpart of the metric tensor.

Equation~\eqref{eq44} does not give information about $D_n(x,y)$ and $D_n(\omega ,\omega ')$ (which after all, are what we wanted to find), but it may give useful indirect information. If we can find $D_n(\omega ,\omega ')$ such that the classical Einstein equation is an approximation to \eqref{eq44}, then this gives information about $D_n(\omega ,\omega ')$. Moreover, an important problem in discrete quantum gravity theory is how to test whether general relativity is a close approximation to the theory. Whether Einstein's equation is an approximation to \eqref{eq44} would provide such a test. In order to consider approximations by Einstein's equation, it will be necessary to let $n\to\infty$ in \eqref{eq44}. However, the convergence of the operators depends on $D_n$ and will be left for later investigations.

We now propose a definition for discrete geodesics. For $A\subseteq Q_n$ we define the $q$-measure
\begin{equation*}
\mu _n(A)=\mu _n\paren{\brac{\omega\in\Omega _n\colon\omega _{\ab{x}}=x\hbox{ for some }x\in A}}
\end{equation*}
and for $x,y\in Q_n$ let
\begin{equation*}
\mu _n(x\cap y)=\mu _n\paren{\brac{\omega\in\Omega _n\colon\omega _{\ab{x}}=x\hbox{ and }\omega _{\ab{y}}=y}}
\end{equation*}
For $x,y\in Q_n$ it is natural to define the \textit{conditional} $q$-\textit{measure}
\begin{equation*}
\mu _n(x\mid y)=\frac{\mu _n(x\cap y)}{\mu _n(y)}\quad\hbox{ if }\mu _n(y)\ne 0
\end{equation*}
and $\mu _n(x\mid y)\!=\!0$ if $\mu _n(y)=0$. For $\omega\in\Omega _n$ define the function
$\omegahat\colon Q_n\to\real ^+$ by
\begin{equation}         
\label{eq45}
\omegahat (x)=\mu _n(x\mid\omega _{\ab{x}-1})\delta _{x,\omega _{\ab{x}}}
\end{equation}
Of course, $\omegahat\in K_n$. We say that $\omega\in\Omega _n$ is a \textit{discrete geodesic} if there is an $a\in\real$ such that $\varbigtriangleup _\omega ^n\omegahat =a\omegahat$; that is $\omegahat$ is an eigenvector of
$\varbigtriangleup _\omega ^n$.

\begin{thm}       
\label{thm43}
An $n$-path $\omega\in\Omega _n$ is a discrete geodesic if and only if there exists a $c\in\real$ such that whenever
$\omega _{\ab{x}}=x$ for $\ab{x}\ge 3$ we have
\begin{equation}         
\label{eq46}
\mu _n(x\mid\omega _{\ab{x}-1})=c\mu _n(\omega _{\ab{x}-1}\mid\omega _{\ab{x}-2})
\end{equation}
\end{thm}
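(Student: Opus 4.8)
The plan is to evaluate $\varbigtriangleup_\omega^n\omegahat$ pointwise on $Q_n$ directly from the definitions and read off the eigenvector condition. The one identity the whole theorem rests on is the value of $\omegahat$ at the shifted argument $\omega_{\ab{x}-1}$. Since $\omega_i\in\pscript_i$ forces $\ab{\omega_j}=j$, the point $z=\omega_{\ab{x}-1}$ has $\ab{z}=\ab{x}-1$, so $\omega_{\ab{z}}=z$ and $\omega_{\ab{z}-1}=\omega_{\ab{x}-2}$; hence from \eqref{eq45} one gets $\omegahat(\omega_{\ab{x}-1})=\mu_n(\omega_{\ab{x}-1}\mid\omega_{\ab{x}-2})$ whenever $\ab{x}\ge 3$. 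Substituting this and \eqref{eq45} into the difference operator yields, for $\ab{x}\ge 3$,
\begin{equation*}
\varbigtriangleup_\omega^n\omegahat(x)=\sqbrac{\mu_n(x\mid\omega_{\ab{x}-1})-\mu_n(\omega_{\ab{x}-1}\mid\omega_{\ab{x}-2})}\delta_{x,\omega_{\ab{x}}}
\end{equation*}

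Next I would impose $\varbigtriangleup_\omega^n\omegahat=a\,\omegahat$ and test it at each $x\in Q_n$. Wherever $\omega_{\ab{x}}\ne x$ both sides vanish by the Kronecker delta, so those points are automatic. At a path point, where $\omega_{\ab{x}}=x$ with $\ab{x}\ge 3$, the display above together with \eqref{eq45} reduces the eigenvector equation to
\begin{equation*}
(1-a)\,\mu_n(x\mid\omega_{\ab{x}-1})=\mu_n(\omega_{\ab{x}-1}\mid\omega_{\ab{x}-2})
\end{equation*}

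Then I would extract the constant and run both implications. For the forward direction a single eigenvalue $a$ makes this relation hold at every level $\ab{x}\ge 3$ with the same $a$; rearranging (in the generic case $a\ne 1$) to $\mu_n(x\mid\omega_{\ab{x}-1})=(1-a)^{-1}\mu_n(\omega_{\ab{x}-1}\mid\omega_{\ab{x}-2})$ and setting $c=(1-a)^{-1}$ gives \eqref{eq46} with one constant $c$. Conversely, given \eqref{eq46} with constant $c\ne 0$, I would substitute $\mu_n(\omega_{\ab{x}-1}\mid\omega_{\ab{x}-2})=c^{-1}\mu_n(x\mid\omega_{\ab{x}-1})$ into the formula for $\varbigtriangleup_\omega^n\omegahat(x)$ above to obtain $\varbigtriangleup_\omega^n\omegahat(x)=(1-c^{-1})\omegahat(x)$ at every path point with $\ab{x}\ge 3$, exhibiting $\omegahat$ as an eigenvector with eigenvalue $a=1-c^{-1}$. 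The essential point that turns a family of scalar relations into a genuine eigenvector statement is that the proportionality constant is forced to be the \emph{same} across all levels.

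The main obstacle I anticipate is not the central identity but the two lowest levels together with the degenerate values of the constants. Both \eqref{eq45} and the difference operator reference $\omega_{\ab{x}-1}$, while \eqref{eq46} in addition references $\omega_{\ab{x}-2}$; at $\ab{x}=1,2$ these reach the undefined symbol $\omega_0$, which is exactly why the clean statement is confined to $\ab{x}\ge 3$, and one must check under whatever convention fixes $\omegahat$ at the minimal element that the levels $\ab{x}=1,2$ impose no constraint beyond \eqref{eq46}. The other delicate point is the correspondence $c=(1-a)^{-1}$: the value $a=1$ forces $\mu_n(\omega_{\ab{x}-1}\mid\omega_{\ab{x}-2})=0$ along the path and is easily absorbed, but $c=0$ is genuinely borderline, since then \eqref{eq46} annihilates $\mu_n(x\mid\omega_{\ab{x}-1})$ at levels $\ab{x}\ge 3$ without controlling the level-$2$ term, and the eigenvector equation at $\ab{x}=3$ fails unless $\mu_n(\omega_2\mid\omega_1)=0$ as well. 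I would therefore expect the exact equivalence to rely on the physically natural nonvanishing of the conditional $q$-measures $\mu_n(\omega_i\mid\omega_{i-1})$ along the path, which rules out the pathological constant and makes the bijection $a\leftrightarrow c$ clean.
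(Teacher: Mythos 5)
Your proof is correct and follows essentially the same route as the paper's: evaluate $\varbigtriangleup_\omega^n\omegahat$ pointwise, let the Kronecker delta dispose of the points with $\omega_{\ab{x}}\ne x$, and use the key identity $\omegahat(\omega_{\ab{x}-1})=\mu_n(\omega_{\ab{x}-1}\mid\omega_{\ab{x}-2})$ to reduce the eigenvector equation to \eqref{eq46}. Your bookkeeping $c=(1-a)^{-1}$ is in fact the corrected form of the paper's ``$c=1-a$'' (which, as printed, puts the constant on the wrong side of the proportionality), and your attention to the degenerate values $a=1$, $c=0$ and to the levels $\ab{x}\le 2$ is more careful than the published argument.
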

\begin{proof}
By definition, $\omega$ is a discrete geodesic if and only if there is an $a\in\real$ such that for all $x\in Q_n$ we have
\begin{equation*}
a\omegahat (x)=\varbigtriangleup _\omega ^n\omegahat (x)=\sqbrac{\omegahat (x)-\omegahat (\omega _{\ab{x}-1})}
  \delta _{x,\omega _{\ab{x}}}
\end{equation*}
Letting $c=1-a$ this last statement is equivalent to
\begin{equation*}
\omegahat (x)=c\omegahat (\omega _{\ab{x}-1})\delta _{x,\omega _{\ab{x}}}
\end{equation*}
Applying \eqref{eq45} we conclude that
\begin{equation*}
\mu _n(x\mid \omega _{\ab{x}-1})\delta _{x,\omega _{\ab{x}}}=c\mu _n(\omega _{\ab{x}-1}\mid\omega _{\ab{x}-2})
  \delta _{x,\omega _{\ab{x}}}
\end{equation*}
If $\omega _{\ab{x}}\ne x$, both sides of this equation vanish so the equation holds. If $\omega _{\ab{x}}=x$ we obtain \eqref{eq46}.
\end{proof}

This definition of a discrete geodesic is very restrictive and it seems desirable to have a more general concept. If
$\omega =\omega _j\omega _{j+1}\cdots\omega _n$ with $\omega _i\in\pscript _i$, $\omega _i\to\omega _{i+1}$,
$i=j,\ldots ,n-1$, we call $\omega$ an $n$-\textit{path starting} at $\omega _j$. Motivated by Theorem~\ref{thm43}, we say that $\omega =\omega _j\omega _{j+1}\cdots\omega _n$ is a \textit{discrete geodesic starting } at $\omega _j$ if $\omega$ is a maximal $n$-path starting at $\omega _j$ satisfying
\begin{equation*}
\mu _n(\omega _k\mid\omega _{k+1})=c\mu _n(\omega _{k-1}\mid\omega _{k-2}),\quad k=j+2,\ldots ,n
\end{equation*}
for some $c\in\real$.

\section{Amplitude Processes} 
Various constructions of a QSGP have been investigated \cite{gud111, gud112,gud121}. In this section we introduce a simple type of QSGP that we call an amplitude process. If nothing else, this might serve as a toy model for discrete quantum gravity.

For $x\in\pscript _n$, $y\in\pscript _{n+1}$ with $x\to y$, let $a(x\to y)\in\complex$ satisfy
\begin{equation}         
\label{eq51}
\sum\brac{a(x\to y)\colon y\in x\offspring}=1
\end{equation}
We call $a(x\to y)$ a \textit{transition amplitude} from $x$ to $y$. By convention we define $a(x\to y)=0$ if $x\not\to y$. For
$\omega =\omega _1\omega _2\cdots\omega _n\in\Omega _n$ we define the \textit{amplitude} of $\omega$ by
\begin{equation*}
a_n(\omega )=a(\omega _1\to\omega _2)a(\omega _2\to\omega _3)\cdots a(\omega _{n-1}\to\omega _n)
\end{equation*}
and we call the vector $a_n\in H_n$ an \textit{amplitude vector}. For $\omega ,\omega '\in\Omega _n$ define the
\textit{decoherence matrix} as
\begin{equation*}
D_n(\omega ,\omega ')=a_n(\omega )\overline{a_n(\omega ')}
\end{equation*}
Let $\rho _n$ be the operator on $H_n$ given by the matrix $D_n(\omega ,\omega ')$. We call the sequence of operators
$\rho _n$, $n=1,2,\ldots$, an \textit{amplitude process} (AP).

\begin{thm}       
\label{thm51}
An AP $\rho _n$ is a QSGP.
\end{thm}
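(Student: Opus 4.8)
The plan is to verify the three properties built into the definition of a QSGP: positivity of each $\rho_n$, the normalization $\elbows{\rho_n1_n,1_n}=1$, and consistency of the sequence $\brac{\rho_n}$. The unifying device is the partial-sum functional $S_n(A)=\sum\brac{a_n(\omega)\colon\omega\in A}$ for $A\in\ascript_n$. Because $A\mapsto D_n(A,B)$ is additive, the decoherence functional factors through these sums: for $A,B\in\ascript_n$,
\begin{equation*}
D_n(A,B)=\sum_{\omega\in A}\sum_{\omega '\in B}a_n(\omega)\overline{a_n(\omega ')}=S_n(A)\overline{S_n(B)}.
\end{equation*}
This single factorization carries almost the entire argument.

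Positivity is then immediate: in the orthonormal basis $\brac{\chi_\omega}$ the matrix $\sqbrac{D_n(\omega,\omega ')}=\sqbrac{a_n(\omega)\overline{a_n(\omega ')}}$ is the outer product of the amplitude vector $a_n$ with itself, hence positive semidefinite (its only nonzero eigenvalue is $\elbows{a_n,a_n}\ge 0$), so $\rho_n$ is a positive operator. For the normalization I would first establish, by induction on $n$, that $S_n(\Omega_n)=1$. The base case $n=1$ holds because $\Omega_1$ consists of a single path and $a_1$ is an empty product, equal to $1$. The inductive step rests on the offspring identity
\begin{equation*}
S_{n+1}(A\offspring)=S_n(A)\qquad(A\in\ascript_n),
\end{equation*}
whose proof is the heart of the matter: writing $A\offspring=\bigcup_{\omega\in A}(\omega\offspring)$ as a disjoint union and using $a_{n+1}(\omega\omega_{n+1})=a_n(\omega)a(\omega_n\to\omega_{n+1})$, one obtains $\sum\brac{a_{n+1}(\omega ')\colon\omega '\in\omega\offspring}=a_n(\omega)\sum\brac{a(\omega_n\to\omega_{n+1})\colon\omega_{n+1}\in\omega_n\offspring}=a_n(\omega)$ by the amplitude normalization \eqref{eq51}; summing over $\omega\in A$ gives the claim. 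Applying this with $A=\Omega_n$, together with $\Omega_{n+1}=\Omega_n\offspring$, completes the induction, so that $\elbows{\rho_n1_n,1_n}=D_n(\Omega_n,\Omega_n)=\ab{S_n(\Omega_n)}^2=1$ and each $\rho_n\in\qscript(H_n)$.

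Consistency follows from the same two facts. For all $A,B\in\ascript_n$,
\begin{equation*}
D_{n+1}(A\offspring,B\offspring)=S_{n+1}(A\offspring)\overline{S_{n+1}(B\offspring)}=S_n(A)\overline{S_n(B)}=D_n(A,B),
\end{equation*}
which is exactly the consistency requirement; the relation $\mu_{n+1}(A\offspring)=\mu_n(A)$ is then the diagonal case $A=B$.

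I expect the only genuine obstacle to be the bookkeeping underlying the offspring identity, namely checking that the families $\omega\offspring$ are pairwise disjoint as $\omega$ ranges over $A$ -- equivalently, that the length-$n$ prefix of an $(n+1)$-path recovers its unique parent -- so that $S_{n+1}(A\offspring)$ genuinely decomposes as the sum of the per-path offspring totals. Once this uniqueness of parents is pinned down, the normalization \eqref{eq51} supplies the rest, and no convergence or limiting questions arise since every computation takes place at a fixed finite level $n$.
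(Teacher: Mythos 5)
Your proof is correct and follows essentially the same route as the paper: positivity from the outer-product form $\rho_n=\ket{a_n}\bra{a_n}$, normalization by summing out the last transition amplitude via \eqref{eq51}, and consistency from the same identity applied offspring-by-offspring. Your factorization $D_n(A,B)=S_n(A)\overline{S_n(B)}$ merely packages the paper's telescoping computation \eqref{eq53} and its path-level consistency check \eqref{eq54} into one lemma, and your worry about disjointness of the sets $\omega\offspring$ is harmless since an $(n+1)$-path determines its length-$n$ prefix uniquely.
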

\begin{proof}
It is clear that $\rho _n$ is a positive operator on $H_n$. Moreover, we have
\begin{align}         
\label{eq52}
\elbows{\rho _n1_n,1_n}&=\elbows{\sum _{\omega '\in\Omega _n}D_n(\omega ,\omega '),1_n}
  =\sum _{\omega ,\omega '\in\Omega _n}D_n(\omega ,\omega ')\notag\\
  &=\sum _{\omega ,\omega '\in\Omega _n}a_n(\omega )\overline{a_n(\omega ')}
  =\ab{\sum _{\omega\in\Omega _n}a_n(\omega )}^2
\end{align}
Applying \eqref{eq51} we obtain
\begin{align}         
\label{eq53}
\sum _{\omega\in\Omega _n}a_n(\omega )
  &=\sum a(\omega _1\to\omega _2)a(\omega _2\to\omega _3)\cdots a(\omega _{n-1}\to\omega _n)\notag\\
  &=\sum a(\omega _1\to\omega _2)\cdots a(\omega _{n-2}\to\omega _{n-1})
  \sum _{\omega _{n-1}\offspring}a(\omega _{n-1}\to\omega _n)\notag\\
  &=\sum a(\omega _1\to\omega _2)\cdots a(\omega _{n-2}\to\omega _{n-1})\notag\\
  &\qquad\vdots\notag\\
  &=\sum _{\omega _1\offspring}a(\omega _1\to\omega _2)=1
\end{align}
By \eqref{eq52} and \eqref{eq53} we conclude that $\elbows{\rho _n1_n,1_n}=1$. To show that $\rho _n$ is a consistent sequence, let $\omega ,\omega '\in\Omega _n$ with $\omega =\omega _1\omega _2\cdots\omega _n$,
$\omega '=\omega '_1\omega '_2\cdots\omega '_n$. By \eqref{eq51} we have
\begin{align}         
\label{eq54}
D_{n+1}&(\omega\offspring ,\omega '\offspring)
 =\elbows{\rho _{n+1}\chi _{\omega '\offspring},\chi _{\omega\offspring}}\notag\\
  &=\sum\brac{a_n(\omega )a(\omega _n\to x)\overline{a_n(\omega ')}\,\overline{a (\omega '_n\to y)}\colon
  \omega _n\to x,\omega '_n\to y}\notag\\
  &=a_n(\omega )\overline{a_n(\omega ')}\sum\brac{a(\omega _n\to x)\colon\omega _n\to x}
  \sum \brac{\overline{a(\omega '_n\to y)}\colon\omega '_n\to y}\notag\\
  &=a_n(\omega )\overline{a_n(\omega ')}=D_n(\omega ,\omega ')
\end{align}
For $A,B\in\ascript _n$, by \eqref{eq54} we have
\begin{align*}
D_{n+1}(A\offspring ,B\offspring )
  &=\sum\brac{D_{n+1}(\omega\offspring ,\omega '\offspring)\colon\omega\in A,\omega '\in B}\\
  &=\sum\brac{D_n(\omega ,\omega ')\colon\omega\in A,\omega '\in B}\\
  &=D_n(A,B)
\end{align*}
\end{proof}

Since the operator $\rho _n$ on $H_n$ has the form $\rho _n=\ket{a_n}\bra{a_n}$ we not only see that $\rho _n$ is a positive operator but that it has rank~$1$ with norm
\begin{equation*}
\|\rho _n\|=\|\ket{a_n}\bra{a_n}\|=\|a_n\|^2=\sum\ab{a_n(\omega )}^2=\rmtr (\rho _n)
\end{equation*}
The decoherence functional corresponding to $\rho _n$ becomes
\begin{align*}
D_n(A,B)&=\elbows{\rho _n\chi _B,\chi _A}=\elbows{\ket{a_n}\bra{a_n}\chi _B,\chi _A}\\
  &=\elbows{a_n,\chi _A}\elbows{\chi _B,a_n}
  =\sum _{\omega\in A}a_n(\omega )\sum _{\omega '\in B}\overline{a_n(\omega )}\\
  &=\sum\brac{D_n(\omega ,\omega ')\colon\omega\in A,\omega '\in B}
\end{align*}
for all $A,B\in\ascript _n$ which is what we expect. The corresponding $q$-measure is given by
\begin{equation}         
\label{eq55}
\mu _n(A)=D_n(A,A)=\ab{\elbows{a_n,\chi _A}}^2=\ab{\sum _{\omega\in A}a_n(\omega )}^2
\end{equation}
for all $A\in\ascript _n$. In particular, for $\omega\in\Omega _n$ we have
$\mu _n\paren{\brac{\omega}}=\ab{a_n(\omega )}^2$. We conclude that $A\in\bscript (\rho _n)$ if and only if
\begin{equation*}
\lim\mu _n(A^n)=\lim\ab{\sum _{\omega\in A^n}a_n(\omega )}^2
\end{equation*}
exists and is finite in which case $\mu (A)$ is the limit.

\section{Complex Percolation Process} 
This section introduces a particular type of AP that still has physical relevance. We use the notation
$y=x\shortuparrow a$ if
$x\to y$ and $y$ is obtained from $x$ by adjoining the maximal element $a$ to $x$. Let $r\in\complex$ with $r\ne 0,1$. For $x,y\in\pscript$ with $y=x\shortuparrow a$ define
\begin{equation}         
\label{eq61}
a(x\to y)=m(x\to y)r^p(1-r)^u
\end{equation}
where $p$ is the number of parents of $a$ and $u$ is the number of unrelated (non ancestors, not equal to $a$) elements of $a$.

\begin{thm}       
\label{thm61}
If $a(x\to y)$ is given by \eqref{eq61}, then $a(x\to y)$ satisfies \eqref{eq51} and hence is a transition amplitude.
\end{thm}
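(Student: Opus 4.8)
The plan is to reduce the sum in \eqref{eq51} to a purely order-theoretic identity about $x$ and then prove that identity by induction on $\ab{x}$.

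First I would identify the products of $x$ with order ideals. Adjoining a maximal element $a$ to $x$ is the same as prescribing the causal past $S=\brac{b\in x\colon b<a}$ of $a$; transitivity of the resulting causet forces $S$ to be a down-set (order ideal) of $x$, and conversely every down-set yields a legitimate product. Under this correspondence the parents of $a$ are exactly the maximal elements of $S$, so $p=\ab{\max S}$, while the unrelated elements are exactly those outside $S$, so $u=\ab{x}-\ab{S}$. Since $p$ and $u$ depend only on the isomorphism type $y$, and since $m(x\to y)$ counts precisely the down-sets of (a labelled copy of) $x$ that produce $y$, summing over products weighted by multiplicity is the same as summing over \emph{all} down-sets. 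Thus \eqref{eq51} is equivalent to
\[
\sum_{S}r^{\ab{\max S}}(1-r)^{\ab{x}-\ab{S}}=1,
\]
the sum ranging over all down-sets $S$ of $x$ (including $S=\emptyset$).

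To prove this identity I would induct on $n=\ab{x}$, the base case $n=0$ being the single term $r^0(1-r)^0=1$. For the step, fix a maximal element $m$ of $x$ and split the down-sets according to whether $m\in S$. If $m\notin S$ then $S$ is a down-set of $x'=x\setminus\brac{m}$, the maximal elements of $S$ are unchanged, and $u$ gains exactly one unit from $m$; this branch contributes $(1-r)$ times the corresponding sum over $x'$, which is $1$ by induction. If $m\in S$ then every strict predecessor of $m$ is forced into $S$ and, lying below $m\in S$, is no longer maximal in $S$, hence carries the trivial factor $r^0(1-r)^0=1$. What remains free is a down-set of the induced subposet $R=\brac{b\in x\colon b\neq m,\ b\not<m}$, and one checks that the maximal elements of $S$ other than $m$ are exactly the maximal elements of this residual ideal in $R$. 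This branch therefore contributes $r$ times the sum over $R$, again $1$ by induction. Combining the two branches gives $(1-r)+r=1$.

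The main thing to verify carefully is that the globally defined quantity $\ab{\max S}$ really does restrict correctly to each subposet in the two branches: I must check that deleting $m$ (first branch) changes no maximal-element status, and that in the second branch the down-closure $\brac{b\colon b<m}$ is already closed downward so that forcing it into $S$ leaks no further constraints into $R$, and that no element of $R$ sits below an element of that down-closure. These are short verifications from the fact that $m$ is maximal. I should also note that although the argument reads like a probability computation (sample $S$ from the top down, making each element that has no already-chosen successor a parent with weight $r$ and unrelated with weight $1-r$), nothing requires $r\in[0,1]$: the recursion $(1-r)+r=1$ is a polynomial identity valid for every $r\in\complex$, which is exactly what \eqref{eq51} demands.
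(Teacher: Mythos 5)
Your proof is correct and is essentially the paper's proof: the same induction on $\ab{x}$, with the products of $x$ split according to whether the newly adjoined element lies above a fixed maximal element $m$ of $x$, the two branches reducing to the subposets $x\setminus\brac{m}$ and $\brac{c\in x\colon c\not\le m}$ and contributing $(1-r)$ and $r$ respectively. The only real difference is presentational: your recasting of products as down-sets makes the multiplicity bookkeeping and the factorization in each branch explicit, where the paper's inductive step states them without verification.
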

\begin{proof}
We prove the result by strong induction on $\ab{x}$. If $\ab{x}=1$, then $x=x_1$ and $(x\offspring )=\brac{x_2,x_3}$ in Figure~1. Hence,
\begin{equation*}
\sum _{x_1\offspring}a(x\to y)=a(x_1\to x_2)+a(x_1\to x_3)=r+(1-r)=1
\end{equation*}
If $\ab{x}=2$, then $x=x_2$ or $x=x_3$ and $(x_2\offspring )=\brac{x_4,x_5,x_6}$, $(x_3\offspring )=\brac{x_6,x_7,x_8}$ in Figure~1. Hence,
\begin{align*}
\sum _{x_2\offspring}a(x_2\to y)&=r+r(1-r)+(1-r)^2=1\\
\sum _{x_3\offspring}a(x_3\to y)&=2r(1-r)+r^2+(1-r)^2=1
\end{align*}
Now assume the result holds for $\ab{x}\le n$ and suppose that $\ab{x}=n+1\ge 3$. We have that $x=x'\shortuparrow a$ for some $x'\in\pscript _n$ and if $z\in x\offspring$, then $z=x\shortuparrow b$. Let
\begin{align*}
A&=\brac{z\in x\offspring\colon b\not> a}\\
B&=\brac{z\in x\offspring\colon b>a}\\
v&=\brac{c\in x\colon c\not\le a}\in\pscript
\end{align*}
We then have
\begin{align*}
\sum\brac{a(x\to z)\colon z\in A}&=(1-r)\sum _{x'\offspring}a(x'\to y')=1-r\\
\sum\brac{a(x\to z)\colon z\in B}&=r\sum _{v\offspring}a(v\to v')=r
\end{align*}
Hence,
\begin{equation*}
\sum _{x\offspring}a(x\to y)=(1-r)+r=1
\end{equation*}
This completes the induction proof.
\end{proof}

It follows from Theorems \ref{thm51} and \ref{thm61} that if $a(x\to y)$ is given by \eqref{eq61}, then the operator $\rho _n$ corresponding to the matrix $D_n(\omega ,\omega ')=a_n(\omega )\overline{a(\omega ')}$ forms an AP and hence a QSGP. We then call $\rho _n$ $n=1,2,\ldots$, a \textit{complex percolation process} (CPP) with \textit{percolation constant} $r$. The form of \eqref{eq61} was chosen because it conforms with general principles of causality and covariance \cite{rs00, vr06}.

As an illustration of a CPP suppose the percolation constant is
\begin{equation*}
r=\frac{1}{\,\sqrt{2}\,}\, e^{i\pi /4}=\frac{1}{2}+\frac{i}{2}
\end{equation*}
This example may have physical relevance because $r$ is the unique complex number satisfying $\ab{r}^2=\ab{1-r}^2=1/2$. Notice that $1-r=\overline{r}$ and we have
\begin{equation*}
a(x\to y)=m(x\to y)r^p\,\overline{r}^{\,u}=\frac{m(x\to y)}{2^{(p+u)/2}}\,e^{i(p-u)\pi /4}
\end{equation*}
For $n=2$, letting $\gamma _1=x_1x_2$, $\gamma _2 =x_1x_3$ from Figure~1, the amplitude vector becomes
\begin{equation*}
a_2=\frac{1}{\,\sqrt{2}\,}\,(e^{i\pi /4},e^{-i\pi /4})
\end{equation*}
and the decoherence matrix is
\begin{equation*}
D_2=\frac{1}{2}\left[\begin{matrix}\noalign{\smallskip}1&i\\-i&1\\\end{matrix}\right]
\end{equation*}
We then have that $\|\rho _2\|=\rmtr(\rho _2)=1$.

For further computations, it is useful to list the transition amplitudes for the causets of Figure~1.
\vglue 2pc

{\hskip - 4pc
\begin{tabular}{c|c|c|c|c|c|c|c}
$(i,j)$&$(1,2)$&$(1,3)$&$(2,4)$&$(2,5)$&$(2,6)$&$(3,7)$&$(3,8)$\\
\hline
$a(x_i\to x_j)$&$r$&$1-r$&$r$&$r(1-r)$&$2r(1-r)$&$r^2$&$(1-r)^2$\\
\end{tabular}}
\vskip 2pc
{\hskip - 4pc
\begin{tabular}{c|c|c|c|c|c|c|c}
$(4,9)$&$(4,10)$&$(4,11)$&$(4,14)$&$(5,11)$&$(5,12)$&$(5,13)$&$(5,15)$\\
\hline
$r$&$r(1-r)^2$&$r(1-r)^2$&$(1-r)^3$&$2r(1-r)$&$r(1-r)^2$&$r^2$&$(1-r)^3$\\
\end{tabular}}
\vskip 2pc
\begin{tabular}{c|c|c|c|c|c|c}
$(6,14)$&$(6,15)$&$(6,16)$&$(6,17)$&$(6,18)$&$(6,19)$&$(7,18)$\\
\hline
$r(1-r)$&$r(1-r)^2$&$r(1-r)^2$&$r^2$&$r^2(1-r)$&$(1-r)^3$&$2r(1-r)^2$\\
\end{tabular}
\vskip 2pc
\begin{tabular}{c|c|c|c|c|c|c}
$(7,20)$&$(7,21)$&$(7,22)$&$(8,19)$&$(8,22)$&$(8,23)$&$(8,24)$\\
\hline
$r$&$r^2(1-r)$&$(1-r)^3$&$3r(1-r)^2$&$3r^2(1-r)$&$r^3$&$(1-r)^3$\\
\noalign{\bigskip}
\multicolumn{7}{c}%
{\textbf{Table 1}}\\
\end{tabular}
\vskip 1.5pc

Letting $\gamma _1=x_1x_2x_4$, $\gamma _2=x_1x_2x_5$, $\gamma _3=x_1x_2x_6$,
$\gamma _4=x_1x_3x_6$, $\gamma _5=x_1x_3x_7$, $\gamma _6=x_1x_3x_8$ in Figure~1, we have that
$\Omega _6=\brac{\gamma _1,\ldots ,\gamma _6}$ with amplitude vector
\begin{equation*}
a_3=2^{-3/2}(\sqrt{2}\,i,e^{i\pi /4},e^{-i\pi /4},2e^{-i\pi /4},e^{i\pi /4}e^{-i3\pi /4})
\end{equation*}
We conclude that $\|\rho _3\|=\rmtr (\rho _3)=5/4$. The decoherence matrix can be computed from
\begin{equation*}
D_3(\omega ,\omega ')=\sqbrac{a_3(\omega )\,\overline{a_3(\omega ')}}
\end{equation*}
the following table gives an ordering of the paths $\gamma _j\in\Omega _4$.
\vskip 2pc

{\hskip - 4pc
\begin{tabular}{c|c|c|c|c|c|c|c}
$j$&$1$&$2$&$3$&$4$&$5$&$6$&$7$\\
\hline
$\gamma _j$&$x_1x_2x_4x_9$&$x_1x_2x_4x_{10}$&$x_1x_2x_4x_{11}$&$x_1x_2x_4x_{14}$
&$x_1x_2x_5x_{11}$&$x_1x_2x_5x_{12}$&$x_1x_2x_5x_{13}$\\
\end{tabular}|
\vskip 1.5pc
{\hskip - 4pc
\begin{tabular}{c|c|c|c|c|c|c}
$8$&$9$&$10$&$11$&$12$&$13$&$14$\\
\hline
$x_1x_2x_5x_{15}$&$x_1x_2x_6x_{14}$&$x_1x_2x_6x_{15}$&$x_1x_2x_6x_{16}$
&$x_1x_2x_6x_{17}$&$x_1x_2x_6x_{18}$&$x_1x_2x_6x_{19}$\\
\end{tabular}}
\vskip 1.5pc
{\hskip - 4pc
\begin{tabular}{c|c|c|c|c|c|c}
$15$&$16$&$17$&$18$&$19$&$20$&$21$\\
\hline
$x_1x_3x_6x_{14}$&$x_1x_3x_6x_{15}$&$x_1x_3x_6x_{16}$&$x_1x_3x_6x_{17}$
&$x_1x_3x_6x_{18}$&$x_1x_3x_6x_{19}$&$x_1x_3x_7x_{18}$\\
\end{tabular}}
\vskip 1.5pc
{\hskip - 4pc
\begin{tabular}{c|c|c|c|c|c|c}
$22$&$23$&$24$&$25$&$26$&$27$&$28$\\
\hline
$x_1x_3x_7x_{20}$&$x_1x_3x_7x_{21}$&$x_1x_3x_7x_{22}$&$x_1x_3x_8x_{19}$
&$x_1x_3x_8x_{22}$&$x_1x_3x_8x_{23}$&$x_1x_3x_8x_{24}$\\
\noalign{\bigskip}
\multicolumn{7}{c}%
{\textbf{Table 2}}\\
\end{tabular}}
\vskip 2pc

Then $\Omega _4=\brac{\gamma _1,\gamma _2,\ldots ,\gamma _{28}}$ with amplitude vector
\begin{align*}
a_4&=\tfrac{1}{8}(2^{3/2}e^{i3\pi/4},2i,\sqrt{2}\,e^{i\pi /4},\sqrt{2}\,e^{-i\pi/4},2^{3/2}e^{i\pi /4},1,2i,-i,\sqrt{2}\,e^{-i\pi /4},\\
  &\quad -i,-i,\sqrt{2}\,e^{i\pi /4},1,-1,2^{3/2}e^{-i\pi /4},-2i,-2i,2^{3/2}e^{i\pi /4},\\
  &\qquad 2,-2,2,2i,i,-i,-3,-3i,1,i)
\end{align*}
We then have $\|\rho _4\|=\rmtr (\rho _4)=25/16$. Although we have not been able to show this, we conjecture that in
general $\|\rho _n\|=(5/4)^{n-2}$.

We now compute some $q$-measures. For $\Omega _2=\brac{\gamma _1,\gamma _2}$ we have
$\mu _2(\gamma _1)=\mu _2(\gamma _2)=1/2$, $\mu _2(\Omega _2)=1$. Also, $\mu _2(x_2)=\mu _2(x_3)=1/2$,
$\mu _2(\pscript _2)=1$. In this case there is no interference and $\mu _2$ is a measure.

For $\Omega _3=\brac{\gamma _1,\ldots ,\gamma _6}$ we have
\begin{align*}
\mu _3(\gamma _1)&=1/4,\quad\mu _3(\gamma _2)=\mu _3(\gamma _3)=\mu (\gamma _5)=\mu _3(\gamma _6)=1/8\\
\mu _3(\gamma _4)&=1/2,\quad\mu _3(\Omega _3)=1
\end{align*}
Moreover, by \eqref{eq55} we have
\begin{align*}
\mu _3\paren{\brac{\gamma _1,\gamma _2}}&=\mu _3(\gamma _1)+\mu _3(\gamma _2)+2\rmre a_3(\gamma _1)
  \overline{a_3(\gamma _2)}\\
  &\qquad \tfrac{1}{4}+\tfrac{1}{8}+\tfrac{1}{4}\rmre\paren{\sqrt{2}\,ie^{-i\pi /4}}=\tfrac{5}{8}
\end{align*}
The $q$-measures of the other doubleton sets are computed in a similar way. These are summarized in Table~3.
\vskip 2pc

\begin{tabular}{c|c|c|c|c|c|c|c}
$(j,k)$&$(1,3)$&$(1,4)$&$(1,5)$&$(1,6)$&$(2,3)$&$(2,4)$&$(2,5)$\\
\hline
$\mu _3\paren{\brac{\gamma _j,\gamma _k}}$&$1/8$&$1/4$&$5/8$&$1/8$&$1/4$&$5/8$&$1/2$\\
\end{tabular}
\vskip 1.5pc
\begin{tabular}{c|c|c|c|c|c|c}
$(2,6)$&$(3,4)$&$(3,5)$&$(3,6)$&$(4,5)$&$(4,6)$&$(5,6)$\\
\hline
$0$&$9/8$&$1/4$&$1/4$&$5/8$&$5/8$&$0$\\
\noalign{\bigskip}
\multicolumn{7}{c}%
{\textbf{Table 3}}\\
\end{tabular}
\vglue 2pc

The other $q$-measure for $\Omega _3$ can be found using our previous results and grade-2 additivity. For example,
\begin{align*}
\mu _3\paren{\brac{\gamma _2,\gamma _5,\gamma _6}}
  &=\mu _3\paren{\brac{\gamma _2,\gamma _5}}+\mu _3\paren{\brac{\gamma _2,\gamma _6}}
  +\mu _3\paren{\brac{\gamma _5,\gamma _6}}\\
  &\quad -\mu _3(\gamma _2)-\mu _3(\gamma _5)-\mu _3(\gamma _6)=1/8
\end{align*}
In a similar way we obtain
\begin{align*}
\mu _3\paren{\brac{\gamma _1,\gamma _2,\gamma_3}}
  &=1/2,\quad\mu _3\paren{\brac{\gamma _1,\gamma _2,\gamma _6}}=1/4\\
\mu _3\paren{\brac{\gamma _1,\gamma _2,\gamma_4}}
  &=5/8,\quad\mu _3\paren{\brac{\gamma _2,\gamma _3,\gamma _4}}=5/4\\
\end{align*}
An example of a $4$-element set is given by
\begin{align*}
\mu _3\paren{\brac{\gamma _1,\gamma _2,\gamma _3,\gamma _4}}&=\mu _3\paren{\brac{\gamma _1,\gamma _2}}
  +\mu _3\paren{\brac{\gamma _1,\gamma _3}}+\mu _3\paren{\brac{\gamma _1,\gamma _4}}\\
  &\quad +\mu _3\paren{\brac{\gamma _2,\gamma _3}}+\mu _3\paren{\brac{\gamma _2,\gamma _4}}
  +\mu _3\paren{\brac{\gamma _3,\gamma _4}}\\
  &\quad -2\sqbrac{\mu _3(\gamma _1)+\mu _3(\gamma _2)+\mu _3(\gamma _3)+\mu _3(\gamma _4)}=1
\end{align*}

The $q$-measure of paths in $\Omega _4=\brac{\gamma _1,\ldots ,\gamma _{28}}$ are given in Table~4.
\vskip 2pc

{\hskip - 2pc
\begin{tabular}{c|c|c|c|c|c|c|c|c|c|c}
$j$&$1$&$2$&$3$&$4$&$5$&$6$&$7$&$8$&$9$&$10$\\
\hline
$\mu _4(\gamma _j)$&$1/8$&$1/16$&$1/32$&$1/32$&$1/8$&$1/64$&$1/16$&$1/64$&$1/32$&$1/64$\\
\end{tabular}|
\vskip 1.5pc
{\hskip - 2pc
\begin{tabular}{c|c|c|c|c|c|c|c|c|c}
$11$&$12$&$13$&$14$&$15$&$16$&$17$&$18$&$19$&$20$\\
\hline
$1/64$&$1/32$&$1/64$&$1/64$
&$1/8$&$1/16$&$1/16$&$1/8$&$1/16$&$1/16$\\
\end{tabular}}
\vskip 1.5pc
{\hskip - 2pc
\begin{tabular}{c|c|c|c|c|c|c|c}
$21$&$22$&$23$&$24$&$25$&$26$&$27$&$28$\\
\hline
$1/16$&$1/16$&$1/64$&$1/64$&$9/64$&$9/64$&$1/64$&$1/64$\\
\noalign{\bigskip}
\multicolumn{8}{c}%
{\textbf{Table 4}}\\
\end{tabular}}
\vskip 2pc

\noindent Instead of finding $\mu _4(A)$ for arbitrary $A\in\ascript _4$, we compute $\mu _4(x)$, $x\in\pscript _4$ which are given in Table~5.
\vskip 2pc

{\hskip - 2pc
\begin{tabular}{c|c|c|c|c|c|c|c|c|c|c}
$j$&$9$&$10$&$11$&$12$&$13$&$14$&$15$&$16$&$17$&$18$\\
\hline
$\mu _4(x_j)$&$1/8$&$1/16$&$9/32$&$1/16$&$1/16$&$1/4$&$1/4$&$9/64$&$9/32$&$25/64$\\
\noalign{\bigskip}
\multicolumn{11}{c}%
{\textbf{Table 5}}\\
\end{tabular}}
\vskip 2pc

We now briefly consider the $q$-measure of some sets in $\bscript (\rho _n)\smallsetminus\cscript (\Omega )$. If
$\omega =\omega _1\omega _2\cdots\in\Omega$, then $\brac{\omega}\notin\cscript (\Omega )$ and
$\brac{\omega}^n=\brac{\omega _1\omega _2\cdots\omega _n}$. We define the \textit{multiplicity} $m(\omega )$ by
\begin{equation*}
m(\omega )=\prod _{j=1}^\infty m(\omega _j\to\omega _{j+1}
\end{equation*}
As suggested by Figure~1, most $\omega\in\Omega$ have finite multiplicity, although there are a few with
$m(\omega )=\infty$. If $m(\omega )<\infty$ then it is easy to verify that $\brac{\omega}\in\bscript (\rho _n)$ and
\begin{equation*}
\mu\paren{\brac{\omega}}=\lim _{n\to\infty}\ab{a_n(\omega _1\omega _2\cdots\omega _n)}^2=0
\end{equation*}
Moreover, it follows by \eqref{eq51} that if $m(\omega )<\infty$, then $\brac{\omega}'\in\bscript (\rho _n)$ and
\begin{equation*}
\mu\paren{\brac{\omega}'}=\lim _{n\to\infty}\ab{1-a_n(\omega _1\omega _2\cdots\omega _n)}^2=1
\end{equation*}
It would be interesting to investigate whether there exist $\omega\in\Omega$ with $m(\omega )=\infty$ and
$\omega\in\bscript (\rho _n)$.

Finally, we briefly discuss discrete geodesics. Let $\rho _n$ be an AP. If $x\to y$ with $x\in\pscript _n$, then
\begin{equation*}
\mu _n(x\cap y)=\mu _n\paren{\brac{\omega\colon\omega =\omega _1\omega _2\cdots xy}}
  =\mu _n(x)\ab{a(x\to y)}^2
\end{equation*}
It follows that $\mu _n(y\mid x)=\ab{a(x\to y)}^2$. We conclude that $\omega =\omega _j\omega _{j+1}\cdots\omega _n$ is a discrete geodesic starting with $\omega _j$ if and only if $\omega$ is maximal and satisfies
\begin{equation}         
\label{eq62}
\ab{a(\omega _{k-1}\to\omega _k)}=c\ab{a(\omega _{k-2}\to\omega _{k-1}}
\end{equation}
$k=j+2,\ldots ,n$, for some $c\in\real$. In particular, suppose $\rho _n$ is a CPP with percolation constant $r$. Letting $n=4$ and employing the notation of Figure~1, the only discrete geodesics starting at $x_1$ are $x_1x_2x_4x_9$ and
$x_1x_3x_8x_{24}$. The constant $c$ in \eqref{eq62} for $x_1x_2x_4x_9$ is $1$ and for $x_1x_2x_8x_{24}$ is
$\ab{1-r}$. A generic causet is contained in at least two discrete geodesics with $c=1$ or $c=\ab{1-r}$. However, there are exceptional causets that are contained in only one discrete geodesic. It would be interesting to classify causets according to their geodesic structure.


\begin{thebibliography}{99}
\bibitem{blms87}L.~Bombelli, J.~Lee, D.~Meyer and R.~Sorkin, 
Spacetime as a casual set, \textit{Phys. Rev. Lett.} \textbf{59} (1987), 521--524.
\bibitem{gud111}S.~Gudder, Discrete quantum gravity, arXiv: gr-qc 1108.2296 (2011).
\bibitem{gud112}S.~Gudder, Models for discrete quantum gravity, arXiv: gr-qc 1108.6036 (2011).
\bibitem{gud121}S.~Gudder, A matter of matter and antimatter, arXiv: gr-qc 1204.3346 (2012).
\bibitem{gud122}S.~Gudder, An Einstein equation for discrete quantum gravity, arXiv: gr-qc 1204.4506 (2012).
\bibitem{rs00}D.~Rideout and R.~Sorkin, A classical sequential growth dynamics for causal sets,
\textit{Phys. Rev. D} \textbf{61} (2000), 024002.
\bibitem{sor94}R.~Sorkin, 
Quantum mechanics as quantum measure theory, \textit{Mod. Phys. Letts.~A} \textbf{9} (1994), 3119--3127.
\bibitem{sor03}R.~Sorkin, 
Causal sets: discrete gravity, arXiv: gr-qc 0309009 (2003).
\bibitem{sur11}S.~Surya, 
Directions in causal set quantum gravity, arXiv: gr-qc 1103.6272 (2011).
\bibitem{vr06}M.~Varadarajan and D.~Rideout,
A general solution for classical sequential growth dynamics of causal sets,
\textit{Phys. Rev. D} \textbf{73} (2006), 104021.
\bibitem{wal84}R.~Wald, General Relativity, \textit{University of Chicago Press}, Chicago 1984.

\end{thebibliography}
\end{document}